\newtheorem{thm}{Theorem}[section]
\newtheorem{prop}[thm]{Proposition}
\newtheorem{lem}[thm]{Lemma}
\newtheorem{cor}[thm]{Corollary}
\begin{document}
\title{\bf On the broadcast routing problem in computer networks}
\date{}
\maketitle
\begin{center}
\author{
{\bf Brahim Chaourar \footnote[1]{This research was supported by the deanship of Scientific Research, Imam Mohammad Ibn Saud Islamic University, Saudi Arabia, Grant No 351223.}}
\\{Department of Mathematics and Statistics,\\Imam Mohammad Ibn Saud Islamic University (IMSIU) \\P.O. Box 90950, Riyadh 11623,  Saudi Arabia}
\\{email: bchaourar@hotmail.com}}
\end{center}

\begin{abstract}
Given an undirected graph $G = (V, E)$, and a vertex $r\in V$, an $r$-acyclic orientation of $G$ is an orientation $OE$ of the edges of $G$ such that the digraph $OG = (V, OE)$ is acyclic and $r$ is the unique vertex with indegree equal to 0. For $w\in \mathbb{R}^E_+$, $k(G, w)$ is the value of the $w$-maximum packing of $r$-arborescences for all $r\in V$ and all $r$-acyclic orientations $OE$ of $G$. In this case, the Broadcast Routing (in Computers Networks) Problem (BRP) is to compute $k(G, w)$, by finding an optimal $r$ and an optimal $r$-acyclic orientation. BRP is a mathematical formulation of multipath broadcast routing in computer networks. In this paper, we provide a polynomial time algorithm to solve BRP in outerplanar graphs. Outerplanar graphs are encountered in many applications such as computational geometry, robotics, etc.
\end{abstract}

\maketitle

{\bf 2010 Mathematics Subject Classification:} Primary 90C27, Secondary 94C99.
\newline {\bf Key words and phrases:} combinatorial optimization; broadcast routing in computer networks; multipath routing; rooted acyclic digraphs; maximum packing of rooted arborescences; outerplanar graphs.

\section{Introduction}

\textbf{Sets and their characteristic vectors will not be distinguished.} We refer to Bondy and Murty \cite{Bondy and Murty 2008} about graph theory terminology and facts.
\newline In computer networking, broadcasting refers to sending a packet to every destination simultaneously \cite{Tanenbaum 2002}.
Broadcast is a communication function that a node, called the source, sends messages to all the other nodes in the network. It is an important and complex function for implementation in optimal network routing with Quality of Service (QoS) requirement \cite{Chao et al. 2001, DeClerc and Paridaens 2002}.
\newline On the inverse of unicast routing, that is sending packets to one single destination, where several efficient multipath algorithms have been proposed \cite{Cetinkaya and Knightly 2004, Banner and Orda 2007, Gallager 1977, Ishida et al. 2006, Mahlous et al. 2008, Mahlous et al. 2009, Mahlous and Chaourar 2011}, few multipath algorithms for broadcast routing are known and most of the algorithms use a unique path to reach the destination, a spanning tree (or a rooted arborescence) \cite{Abolhasan et al. 2004, Akkaya and Younis 2005, Boukerche et al. 2011, Reina et al. 2015}. In this paper, we propose a novel and efficient multipath algorithm for broadcast routing in outerplanar networks.
\newline Multipath routing is proposed as an alternative to single path routing to take advantage of network redundancy, distribute load \cite{Suzuki and Tobagi 1992}, improve packet delivery reliability \cite{Ishida et al. 1992}, ease congestion on a network \cite{Bahk and Zarki 1992, Georgatsos and Griffin 1996}, improve robustness \cite{Tang and McKinley 2004}, increase network security \cite{Bohacek et al. 2002} and address QoS issues \cite{Bohacek et al. 2007}.
\newline Instead of finding the best spanning tree (or rooted arborescence) when using a single path, we propose to find an acyclic packing of rooted arborescences, i.e., directed rooted spanning trees, when using multipath routing because we should avoid corruption and redundancy, i.e., sending packets in two inverse directions at the same time and sending packets more than once to the same node, respectively, and we want to maximize the used bandwidth.
\newline An outerplanar graph is a graph that has a planar drawing for which all vertices belong to the outer face of the drawing. The outerplanar graphs are a subclass of the planar graphs. It follows that any simple and 2-connected outerplanar graph $G$ is formed by a circuit of the outer face and single edges linking between vertices of this circuit in a manner that preserves planarity. We call this circuit an {\em outer circuit} of $G$, these edges the {\em chords} of $G$ regarding this outer circuit, and the common vertices between the outer circuit and the chords the {\em join vertices}. We call the paths of the outer circuit joining successive join vertices the {\em outer chords}. Finally, we call the outer chords containing at least one vertex of degree 2, the {\em series outer chords}. The outer chords which are not series ones are single edges. Two chords (outer or not) are {\em adjacent} if they have one common join vertex.
\\The main result of this paper is a polynomial time algorithm to solve two mathematical formulations of multipath broadcast routing, namely BRP and rBRP, in outerplanar graphs (for the definitions of BRP and rBRP, see section 2).
\newline Many Combinatorial Optimization problems have been studied in outerplanar graphs/networks \cite{Andersen et al. 1998,  Calamoneri et al. 2009, Chang and Zhu 2009, Chen and Wang 2007, de Mier and Noy 2009, Esperet and Ochem 2007, Fiorini 1975, Fomin et al. 2005, Garg and Rusu 2007, Govindan et al. 1998, Gupta and Sinha 1999, Kant 1996, Kovacs and Lingas 2000, Liu and Lu 2010, Pinlou and Sopena 2006, Wang et al. 2002, Zhu and Goddard 1991} and most of them have been proved polynomial in this class of graphs. Another motivation is that "outerplanar graphs constitute an important class of graphs, often encountered in various applications, e.g., computational geometry, robotics, etc." \cite{Gupta and Sinha 1999}
\newline On the other hand, studying the broadcast routing problem in these outerplanar networks is also motivated by the difficulty of sending packets through "parallel" rooted arborescences in general graphs/networks without avoiding corruption or redundancy.
\newline In this paper, we use a combinatorial optimization approach and not a computer network one. However, we are motivated by the latter.
\newline The remainder of the paper is organized as follows: in section 2, we give a mathematical formulation of the problem, then, in section 3, we reduce the problem to simple and 2-connected graphs. We solve it in simple and 2-connected outerplanar graphs in section 4. And we conclude in section 5.

\section{Mathematical Formulation}
We say that a digraph $G=(V, E)$ is rooted at a vertex $r\in V$ if $r$ is the unique vertex in $G$ with indegree zero, that is there is no directed edge to $r$ (or $\delta^-(\{ r\})=\O$). In this case, the indegree of any vertex $v\in V$ is given by $|\delta^-(v)|$. An $r$-arborescence in the digraph $G$ is a spanning arborescence where $r$ is the root, i.e., the unique vertex with indegree zero. Given an undirected graph $G = (V, E)$, and a vertex $r\in V$, an acyclic orientation of $G$ rooted at $r$ (or an $r$-acyclic orientation) is an orientation $OE$ of the edges of G such that the digraph $OG = (V, OE)$ is acyclic and rooted at $r$. Moreover, if we define a weight function $w\in \mathbb{R}_+^E$, a $w$-packing of spanning trees (respectively, $r$-arborescences) is a finite class of pairs $(\lambda_i, T_i)$, $i=1, ..., t$, where the $\lambda_i$'s are positive reals and the $T_i$'s are spanning trees (respectively, $r$-arborescences), such that $\sum_{i=1}^{t} \lambda_i T_i (e)\leq w(e)$ for any edge $e$ ($T_i$ here is identified with its characteristic vector according to the context). The value of this $w$-packing is given by $\sum_{i=1}^{t} \lambda_i$. A $w$-maximum packing of spanning trees (respectively, $r$-arborescences) is a $w$-packing of spanning trees (respectively, $r$-arborescences) with a maximum value among all $w$-packings of spanning trees (respectively, $r$-arborescences).
\newline Given a graph $G=(V, E)$, and a nonnegative real weight function $w$ defined on $E$, we denote by $ku(G, w)$ the value of the $w$-maximum packing of spanning trees in $G$.
\\Given a vertex $r\in V$, and an $r$-acyclic orientation $OE$ of the edges of $G$, we denote by $k(G, OE, w, r)$ the value of the $w$-maximum packing of $r$-arborescences in $OG = (V, OE)$ for a fixed $r\in V$, $k(G, w, r)=Max \{k(G, OE, w, r) : OE$ an $r$-acyclic orientation of  $G$\}, that is the value of $w$-maximum packing of $r$-arborescences among all possible $r$-acyclic orientations of $G$ for a fixed $r\in V$, and $k(G, w) = Max\{k(G, w, r): r\in V\}$, that is the value of $w$-maximum packing of $r$-arborescences among all $r\in V$ and all possible $r$-acyclic orientations of $G$. We say that a vertex $r$ and an $r$-acyclic orientation $OE$ are optimal if $k(G, OE, w, r)=k(G, w)$.
\newline Now we can define the Broadcast Routing Problem (BRP) as follows. Given an undirected graph $G = (V, E)$, and a nonnegative real weight function $w$ defined on $E$. The question is to compute $k(G, w)$ by finding appropriate optimal root $r$ and $r$-acyclic orientation $OE$, and the corresponding $r$-arborescences involved in the maximum packing.
\newline $OG = (V, OE)$ should be acyclic to avoid corruption and redundancy, and the root $r$ represents the source node in the broadcast routing.
\newline We route data through $OG$ by using the optimal packing of $r$-arborescences, thus we maximize the used bandwidth for each directed edge of $OG$ and, virtually, we have a multipath broadcast routing which is given by the involved $r$-arborescences in the maximum packing.
\newline We also need to define a related problem rBRP as follows. Given an undirected graph $G = (V, E)$, a vertex $r\in V$, and a nonnegative real weight function $w$ defined on $E$. The question is to compute $k(G, w, r)$ by finding an appropriate optimal $r$-acyclic orientation $OE$, and the corresponding $r$-arborescences involved in the maximum packing. It is clear that BRP can be reduced to rBRP in $O(n)$, where $n=|V|$. rBRP modelizes a classical broadcasting problem where the source $r$ is given in the input, while BRP modelizes a new point of view where we want to find the best source for broadcasting in a given network. We solve both problems in outerplanar graphs.
\newline Given a directed graph $G=(V, E)$, an inedge to a given vertex $v\in V$, is any directed edge $e\in \delta^-(v)$.
\newline Since we are dealing with maximum packing of rooted arborescences, we need the following theorem due to Edmonds \cite{Edmonds 1970}. Note that, given a digraph $G=(V, E)$ and $r\in V$, an $r$-cut is a cut $\delta^+(U)$ (respectively, $\delta^-(U)$ with $U\neq \O$) for some $U\subset V$ with $r\in U$ (respectively, $r\notin U$).
\begin{thm}
Given a digraph $G$, a vertex $r\in V(G)$, and a weight function $w\in \mathbb{R}_+^{E(G)}$, then the value of the $w$-maximum weight packing of $r$-arborescences equals the value of the $w$-minimum weight of an $r$-cut.
\end{thm}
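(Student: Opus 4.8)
The plan is to prove the two inequalities separately, treating the max-weight packing and the min-weight $r$-cut as the optimal values of a primal-dual pair of linear programs. Write $\delta^-(S)$ for the set of arcs entering a set $\emptyset \neq S \subseteq V(G) \setminus \{r\}$; an $r$-cut is exactly such a set $\delta^-(S)$, with weight $w(\delta^-(S)) = \sum_{e \in \delta^-(S)} w_e$. A packing is an assignment of multiplicities $\lambda_A \geq 0$ to the $r$-arborescences $A$ satisfying $\sum_{A \ni e} \lambda_A \leq w_e$ for every arc $e$, and its value is $\sum_A \lambda_A$.

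The easy direction (weak duality) comes first. Since an $r$-arborescence $A$ contains a directed $r$--$v$ path for every $v \in S$ while $r \notin S$, such a path must leave $V(G) \setminus S$ and enter $S$, so $A$ uses at least one arc of every $r$-cut. Hence for a fixed cut $\delta^-(S)$ we have $\sum_A \lambda_A \leq \sum_A \lambda_A\,|E(A) \cap \delta^-(S)| = \sum_{e \in \delta^-(S)} \sum_{A \ni e} \lambda_A \leq w(\delta^-(S))$. Maximizing the left side and minimizing the right side yields max packing $\leq$ min $r$-cut.

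For the reverse inequality I would first establish the purely combinatorial integral case, namely Edmonds' disjoint arborescences theorem: if $w \equiv 1$ and every $r$-cut has size at least $k$, then $G$ contains $k$ arc-disjoint spanning $r$-arborescences. The heart of this is an extraction lemma: \emph{if $d^-_G(S) \geq k$ for every nonempty $S \subseteq V(G) \setminus \{r\}$ with $k \geq 1$, then $G$ has a spanning $r$-arborescence $T$ with $d^-_{G - E(T)}(S) \geq k - 1$ for all such $S$}; removing $T$ and inducting on $k$ then produces the $k$ arborescences. To prove the lemma I would grow $T$ from the root, maintaining an arborescence on a reachable set $R$ together with the invariant that every $r$-cut still has at least $k - 1$ arcs outside $T$, and show that whenever $R \neq V(G)$ a further arc leaving $R$ can be added without violating the invariant.

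The main obstacle is exactly this extraction step: guaranteeing a safe arc to add. Here I would call a set \emph{tight} when its residual cut has dropped to $k - 1$, use the submodularity inequality $d^-(X) + d^-(Y) \geq d^-(X \cap Y) + d^-(X \cup Y)$ to uncross the tight sets into a well-structured (laminar) family, and then argue by counting that the arcs entering $V(G) \setminus R$ cannot all be blocked by tight sets, so a safe extending arc always exists. Finally, to pass from the integral case to arbitrary $w \in \R_+^{E(G)}$, I would scale rational weights to integers and apply the integral theorem to the corresponding multigraph, then obtain real weights by a continuity argument, the min-cut value being a piecewise-linear function of $w$; alternatively, the integral case shows that the covering polyhedron dual to the packing LP is integral (total dual integrality), whence strong LP duality over the reals delivers the stated equality directly.
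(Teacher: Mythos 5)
The paper never actually proves this statement, so there is no internal argument to compare yours against: Theorem 2.1 is invoked as a known result attributed to Edmonds (with the citation to the 1970 paper) and is then used as a black box in the rest of the text, e.g.\ in the circuit example and in the proof of Corollary 3.4. Judged as a reconstruction of Edmonds' arborescence packing theorem, your outline follows the standard textbook route and its architecture is sound: the weak-duality direction is complete and correct as written; the hard direction correctly reduces to the integral disjoint-branchings theorem via an extraction lemma (Lov\'asz's proof), and the weighted statement then follows by replacing each arc $e$ by $w(e)$ parallel copies for integer weights, scaling for rational weights, and continuity or total dual integrality for real weights. The one place where your text is a plan rather than a proof is the safe-arc step inside the extraction lemma, and there the laminar-family language is heavier than necessary; the clean argument needs only one extremal tight set. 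Concretely, let $G'$ be the current residual digraph, $R$ the vertex set of the partial arborescence $T$, and call $X$ tight if $|\delta^-_{G'}(X)|=k-1$. If no tight set meets $V(G)\setminus R$, any residual arc from $R$ into $V(G)\setminus R$ is safe, and one exists because $|\delta^-_{G'}(V(G)\setminus R)|\geq k\geq 1$. Otherwise choose an inclusionwise \emph{minimal} tight set $X$ with $X\setminus R\neq\emptyset$. Every arc of $T$ has its head in $R$, so $|\delta^-_{G'}(X\setminus R)|=|\delta^-_{G}(X\setminus R)|\geq k$, while at most $|\delta^-_{G'}(X)|=k-1$ of the arcs entering $X\setminus R$ have tails outside $X$; hence some residual arc $(u,v)$ runs from $X\cap R$ to $X\setminus R$. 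If a tight set $Y$ contained $v$ but not $u$, submodularity would force $X\cap Y$ to be tight (it is nonempty, avoids $r$, and meets $V(G)\setminus R$ since it contains $v$), so minimality of $X$ would give $X\subseteq Y$, contradicting $u\in X\setminus Y$; thus $(u,v)$ is safe. With that substituted for your ``uncross into a laminar family and count'' sentence, your proposal is a correct and complete proof strategy for a theorem the paper itself only cites.
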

Note that $ku(G, w)$ and $k(G, w)$ are not equal in general, even for (circuits and) outerplanar graphs as shown in the minimal example here below.
\newline \textbf{Example:} Let $G=abcda$ be a circuit on 4 vertices such that $V(G)=\{a, b, c, d\}$, $w(\{a, b\}) = w(\{c, d\}) = 2$, and $w(\{a, d\}) = w(\{b, c\}) = 1$.
\newline It is not difficult to see that $ku(G, w) = 2$ and $k(G, w) = 1$ because for any chosen root $r$ and any $r$-acyclic orientation of $G$, there is a trivial $r$-cut with weight 1, and, according to Theorem 2.1, the value of the maximum packing of $r$-arborescences equals the value of the minimum weight of an $r$-cut.
{\hfill \hbox{\raisebox{.5ex}{\fbox}$\phantom{.}$}}
\newline We give here a result about the hardness of getting a solution by enumeration.
\begin{prop}
The number of rooted acyclic orientations of $K_n$ ($n\geq 2$) is $n!$.
\end{prop}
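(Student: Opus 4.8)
The plan is to exhibit a bijection between the rooted acyclic orientations of $K_n$ and the linear orderings (permutations) of the vertex set $V$, of which there are exactly $n!$.

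First I would recall that any orientation of $K_n$ is a tournament, since every pair of vertices is joined by exactly one edge. A tournament is acyclic if and only if it is \emph{transitive}, i.e.\ there is a total order $v_1 \prec v_2 \prec \cdots \prec v_n$ of $V$ such that every edge is oriented from the smaller endpoint to the larger one. Indeed, an acyclic digraph admits a topological order, and on a complete digraph this order is forced to be unique and linear; conversely each total order defines a transitive (hence acyclic) tournament. This sets up a bijection between the acyclic orientations of $K_n$ and the $n!$ linear orders of $V$.

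Next I would check that the rooting condition is automatic here. In the transitive tournament attached to an order $v_1 \prec \cdots \prec v_n$, the minimum vertex $v_1$ has indegree $0$, while every other vertex $v_i$ ($i \geq 2$) receives the arc from $v_1$ and so has positive indegree. Thus each acyclic orientation of $K_n$ has a \emph{unique} source, namely $v_1$, and is therefore an $r$-acyclic orientation with $r = v_1$. Since the root $r$ is recovered as this unique indegree-$0$ vertex, it is determined by the orientation and contributes no extra multiplicity; conversely, every $r$-acyclic orientation is in particular acyclic. Hence the rooted acyclic orientations of $K_n$ coincide exactly with its acyclic orientations, and their number is $n!$.

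The only point that needs a little care is the claim that acyclicity of a tournament forces transitivity and a unique source; everything else is a direct count. This is standard, but I would make it explicit that it is precisely the completeness of $K_n$ that guarantees the unique-source condition comes for free, so that the count is the full $n!$ rather than a proper subset of the acyclic orientations.
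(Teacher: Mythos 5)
Your proof is correct and follows essentially the same route as the paper: the bijection between acyclic orientations of $K_n$ and the $n!$ linear orders of the vertices. The only difference is that you explicitly verify that the rooting condition (a unique indegree-$0$ vertex) is automatic for complete graphs, a point the paper's proof leaves implicit; this is a worthwhile clarification but not a different argument.
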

\begin{proof}
Since every acyclic orientation gives a topological order of the vertices and vice-versa, and there is a unique acyclic orientation if we fix an order for the vertices, then the number of rooted acyclic orientations of $K_n$ is equal to the number of orders on its vertices which is $n!$.
\end{proof}


\section{BRP is reducible to simple and 2-connected graphs}

First we can reduce BRP and rBRP to simple graphs by using the following proposition.
\begin{prop} Consider the two following cases:
\newline (1) Let $G = (V, E)$ be an undirected graph, $e$ and $f$ are parallel edges of $G$, and $w\in \mathbb{R}_+^E$. Let $G_f = G\backslash f$ and $w_f\in \mathbb{R}_+^{E\backslash \{f\}}$ such that $w_f(e) = w(e)+w(f)$ and $w_f(g) = w(g)$ for $g\neq e$.
\newline (2) Let $G = (V, E)$ be an undirected graph, $f$ is a loop of $G$, and $w\in \mathbb{R}_+^E$. Let $G_f = G\backslash f$ and $w_f\in \mathbb{R}_+^{E\backslash \{f\}}$ such that $w_f(g) = w(g)$ for $g\neq f$.
\newline Then for both cases (1) and (2), we have: $k(G, w) = k(G_f, w_f)$ and $k(G, w, r) = k(G_f, w_f, r)$ for any $r\in V$.
\end{prop}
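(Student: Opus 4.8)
The plan is to use Edmonds' theorem (Theorem 2.1) to translate every packing value into a minimum-weight $r$-cut, and then, for each of the two cases, to exhibit a weight-preserving bijection between the $r$-acyclic orientations of $G$ and those of $G_f$. Concretely, for a fixed root $r$ and a fixed $r$-acyclic orientation $OE$, write $\delta^-(S)$ for the set of arcs of $OG$ entering a set $S$ with $\emptyset \ne S \subseteq V\setminus\{r\}$; by Theorem 2.1, $k_d(G,OE,w,r) = \min\{w(\delta^-(S)) : \emptyset \ne S \subseteq V\setminus\{r\}\}$. If I can show that corresponding orientations $OE$ of $G$ and $OE_f$ of $G_f$ satisfy $w(\delta^-_{OG}(S)) = w_f(\delta^-_{OG_f}(S))$ for every such $S$, then $k_d(G,OE,w,r)=k_d(G_f,OE_f,w_f,r)$, and taking the maximum first over the (bijectively matched) orientations and then over the roots yields both claimed equalities.

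For case (1), let $e$ and $f$ join the same pair $\{u,v\}$. The key structural observation is that in any acyclic orientation $e$ and $f$ must point the same way: orienting them oppositely produces a directed $2$-cycle, contradicting acyclicity. This lets me define a bijection $\Phi$ sending an $r$-acyclic orientation $OE$ of $G$ to the orientation of $G_f$ obtained by simply forgetting $f$, with inverse $\Psi$ that reinserts $f$ with the same direction as $e$. I would check that both maps preserve acyclicity (a subdigraph of an acyclic digraph is acyclic, and a directed cycle through the reinserted $f$ could be rerouted through the already-present, equally directed $e$) and preserve the ``unique indegree-$0$ vertex'' condition (deleting $f$ cannot create a new source, since its head still receives $e$; in particular the head of $e$ is not $r$, so reinserting $f$ does not disturb $r$). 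Finally, because $e$ and $f$ are co-oriented with the same endpoints, for every $S$ they cross into $S$ together or not at all, contributing $w(e)+w(f)=w_f(e)$ in $OG$ versus $w_f(e)$ in $OG_f$, while all other edges contribute identically; hence the two cut weights agree.

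For case (2), let $f$ be a loop at a vertex $v$. A loop never belongs to an $r$-arborescence (which, being a tree, is loopless) and, having both endpoints equal to $v$, never lies in $\delta^-(S)$ for any $S$; thus its weight contributes to no $r$-cut, which is exactly why $w_f$ simply discards $w(f)$. Here the bijection between orientations is the trivial one that ignores the loop (under the convention that a single self-loop is not counted as a directed cycle, so that it does not obstruct acyclicity). Since the loop is absent from every $r$-cut, the cut weights of corresponding orientations coincide verbatim, and the same maximization argument as above finishes the case.

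The routine verifications---acyclicity and the source condition under $\Phi$ and $\Psi$---are straightforward; the one point deserving care is the structural fact that parallel edges are forced to be co-oriented in every acyclic orientation, since it is precisely this that makes $\Phi$ well defined and a genuine bijection and that justifies additively combining the two weights into $w_f(e)=w(e)+w(f)$. The loop case is essentially bookkeeping once one observes that loops are invisible to both arborescences and $r$-cuts.
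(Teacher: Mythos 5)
Your proposal is correct and rests on exactly the two facts the paper's own (one-sentence) proof invokes: parallel edges must be co-oriented in any acyclic orientation, and loops are invisible to both arborescences and $r$-cuts. The explicit orientation bijection and the cut-weight comparison via Theorem 2.1 are just a careful elaboration of what the paper dismisses as ``direct,'' so the two arguments are essentially the same.
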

\begin{proof}
Direct from the fact that parallel edges in acyclic digraphs should be oriented in the same direction, and removing loops do not affect any packing of rooted arborescences because arborescences do not contain loops.
\end{proof}
We also can reduce BRP and rBRP to 2-connected graphs by using the following proposition. We denote by $G_1\oplus_v G_2$, or $G_1\oplus G_2$ if there is no confusion, the 1-sum of two graphs $G_1$ and $G_2$ where $v$ is the base vertex for this 1-sum.
\begin{prop}
Let $G = (V, E)$ be an undirected graph, such that $G=G_1\oplus G_2$, where $v\in V(G_1)\cap V(G_2)$ is the base vertex for the 1-sum, $v_1\in V(G_1)$, $w\in \mathbb{R}_+^E$, and $w_i\in \mathbb{R}_+^{E(G_i)}$, $i=1,2$, be the projection of $w$ on $E(G_i)$, $i=1,2$.
\newline Then $k(G, w, v_1) = Min\{ k(G_1, w_1, v_1), k(G_2, w_2, v)\}$.
\end{prop}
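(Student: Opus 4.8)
The plan is to prove the two inequalities $k(G,w,v_1) \le \min\{k(G_1,w_1,v_1), k(G_2,w_2,v)\}$ and $k(G,w,v_1) \ge \min\{k(G_1,w_1,v_1), k(G_2,w_2,v)\}$ separately, exploiting throughout that $v$ is a cut vertex of $G = G_1\oplus G_2$. Two structural facts will do the work: $E(G)$ is partitioned into $E(G_1)$ and $E(G_2)$, and every cycle of $G$ lies entirely in $G_1$ or entirely in $G_2$ (so acyclicity of $G$ is equivalent to acyclicity of both parts).

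For the upper bound I would start from an optimal $v_1$-acyclic orientation $OE$ of $G$ together with a packing of $v_1$-arborescences of value $k = k(G,w,v_1)$. The key observation is that, since $v$ separates $G_1$ from $G_2$, each arborescence $T$ in the packing decomposes as $T = (T\cap E(G_1)) \cup (T\cap E(G_2))$, where $T\cap E(G_1)$ is a spanning $v_1$-arborescence of $G_1$ and $T\cap E(G_2)$ is a spanning $v$-arborescence of $G_2$ (every directed path from $v_1$ into $G_2$ must pass through $v$). Restricting the whole packing to $E(G_1)$ and to $E(G_2)$ then yields packings of value $k$ in $G_1$ rooted at $v_1$ and in $G_2$ rooted at $v$ that respect $w_1$ and $w_2$, giving $k \le k(G_1,w_1,v_1)$ and $k \le k(G_2,w_2,v)$, hence $k \le \min\{k(G_1,w_1,v_1),k(G_2,w_2,v)\}$.

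For the lower bound, set $m = \min\{k(G_1,w_1,v_1), k(G_2,w_2,v)\}$ and take optimal orientations $OE_1$ of $G_1$ and $OE_2$ of $G_2$. I would glue them into an orientation $OE = OE_1\cup OE_2$ of $G$ and first check it is $v_1$-acyclic: acyclicity follows since every cycle stays in one part, and $v_1$ is the unique source because $v$ receives an in-arc inside $G_1$ (it is not the root there) while every other vertex already has positive indegree in its own part. Then I would invoke Edmonds' min-cut characterization (Theorem 2.1): for any $\emptyset \neq S \subseteq V\setminus\{v_1\}$ the edge partition gives $w(\delta^-_{OE}(S)) = w_1(\delta^-_{OE_1}(S\cap V(G_1))) + w_2(\delta^-_{OE_2}(S\cap V(G_2)))$. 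A short case analysis on whether $S\cap V(G_1)$ is empty shows this sum is always at least $m$: if $S\cap V(G_1)\neq\emptyset$ it is a valid $v_1$-cut of $G_1$ of weight $\ge k(G_1,w_1,v_1)\ge m$, and otherwise $S\subseteq V(G_2)\setminus\{v\}$ is a valid $v$-cut of $G_2$ of weight $\ge k(G_2,w_2,v)\ge m$; the remaining summand is nonnegative. Thus $k_d(G,OE,w,v_1)\ge m$, so $k(G,w,v_1)\ge m$, and combined with the upper bound the equality follows.

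I expect the main obstacle to be the lower bound, specifically verifying that the glued orientation $OE$ really is $v_1$-acyclic (the delicate point being the status of the shared vertex $v$, which is the root of $G_2$ yet must acquire its indegree entirely from $G_1$) and that the cut decomposition correctly handles the boundary case $v\in S$, where the $G_2$-term is not a genuine $v$-cut and may legitimately be dropped as a nonnegative quantity.
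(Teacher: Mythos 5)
Your proof is correct, and its skeleton---decomposing everything at the cut vertex $v$---is the same as the paper's. The difference is one of completeness and of tools. The paper's entire proof is the single observation that restricting a $v_1$-acyclic orientation of $G$ to $G_1$ and $G_2$ yields a $v_1$-acyclic orientation of $G_1$ and a $v$-acyclic orientation of $G_2$; the converse gluing and the transfer of packing values are left entirely implicit. You prove both inequalities in full: for the upper bound you push the paper's restriction fact down to the level of packings, showing each $v_1$-arborescence of $G$ splits into a $v_1$-arborescence of $G_1$ and a $v$-arborescence of $G_2$; for the lower bound you glue optimal orientations and then, rather than attempting to merge the two packings, you invoke Edmonds' theorem (Theorem 2.1 in the paper) and check that every cut $\delta^-(S)$ in the glued orientation has weight at least $\min\{k(G_1,w_1,v_1),\,k(G_2,w_2,v)\}$ via the case analysis on $S\cap V(G_1)$. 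This min-cut route is a genuine addition relative to the paper's one-liner: it sidesteps the fiddly common-refinement argument one would otherwise need to combine two fractional packings into one, and it forces you to verify exactly the two delicate points the paper glosses over---that $v$ acquires its indegree from $G_1$, so the glued orientation has $v_1$ as its unique source, and that when $v\in S$ the $G_2$-side of the cut is not a valid $v$-cut but can be discarded as a nonnegative quantity. The cost is a dependence on Theorem 2.1, which the paper's own (sketchy) argument does not need; the benefit is an argument that would survive refereeing.
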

\begin{proof}
Direct from the fact that the restriction of any $v_1$-acyclic orientation of $G$ to $G_i$, $i=1,2$, induces a $v_1$-acyclic orientation of $G_1$ and a $v$-acyclic orientation of $G_2$, respectively, and any packing of $v_1$-arborescences in $G$ induces a packing of $v_1$-arborescences in $G_1$ and a packing of $v$-arborescences in $G_2$.
\end{proof}
We need the following lemma for solving BRP and rBRP in any kind of graphs.
\begin{lem}
Given an acyclic digraph $G$, a vertex $r\in V(G)$, and a weight function $w\in \mathbb{R}_+^{E(G)}$, then the value of the $w$-minimum $r$-cut is reached at a (single vertex) trivial $r$-cut.
\end{lem}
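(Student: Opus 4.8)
The plan is to exploit the acyclicity of $G$ through a topological ordering of its vertices. Recall that, in the setting of $r$-arborescences, an $r$-cut is a set of arcs of the form $\delta^-(S) = \{(u,v)\in E(G) : u\notin S,\ v\in S\}$ for some nonempty $S\subseteq V(G)\setminus\{r\}$, and a trivial (single-vertex) $r$-cut is one of the form $\delta^-(\{v\})$ with $v\neq r$. To prove the lemma I want to show that every $r$-cut has weight at least that of some trivial $r$-cut; since trivial cuts are themselves $r$-cuts, the $w$-minimum is then attained at a trivial one.

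First I would fix a topological order of $V(G)$, which exists precisely because $G$ is acyclic; in this order every arc $(u,v)$ satisfies that $u$ precedes $v$. Given an arbitrary $r$-cut $\delta^-(S)$, I would single out the vertex $v\in S$ that comes \emph{first} in this topological order. Since $r\notin S$, we have $v\neq r$, so $\delta^-(\{v\})$ is a genuine trivial $r$-cut, and it is this comparison vertex that the whole argument hinges on.

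The key step is the inclusion $\delta^-(\{v\})\subseteq\delta^-(S)$. Take any arc $(u,v)\in\delta^-(\{v\})$. Because the arc respects the topological order, $u$ precedes $v$; but $v$ is the first vertex of $S$ in that order, so no vertex preceding $v$ lies in $S$, whence $u\notin S$. Thus $(u,v)$ enters $S$ from outside, i.e. $(u,v)\in\delta^-(S)$. As $w\in\R_+^{E(G)}$, this inclusion yields $w(\delta^-(\{v\}))\leq w(\delta^-(S))$.

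Finally, since $S$ was an arbitrary $r$-cut, the minimum weight over trivial $r$-cuts is at most the minimum over all $r$-cuts, and the reverse inequality is immediate because trivial cuts form a subfamily of all $r$-cuts. Hence the $w$-minimum $r$-cut is reached at a trivial one. I do not expect a real obstacle here: the only point requiring care is selecting the first rather than the last vertex of $S$ in the topological order, as this is exactly what forces every in-arc of $v$ to cross the boundary of $S$ and thereby land in $\delta^-(S)$.
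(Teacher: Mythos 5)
Your proof is correct and takes essentially the same approach as the paper: both arguments single out a vertex $v$ in the cut-defining set $S$ all of whose in-arcs originate outside $S$, conclude $\delta^-(\{v\})\subseteq\delta^-(S)$, and finish using nonnegativity of $w$. The only cosmetic difference is that you locate $v$ as the first vertex of $S$ in a topological order of $G$, whereas the paper takes $v$ to be an indegree-zero vertex of the acyclic induced subgraph $G[S]$ — two guises of the same selection idea.
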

\begin{proof}
Let $U\subset V(G)$ such that $r\not\in U$ and $U$ is not empty. Then the subgraph $G[U]$ induced by $U$ is also acyclic and contains at least one (root) vertex $v\in U$ such that the indegree of $v$ in $G[U]$ is zero. It follows that $\delta^-(v)\subseteq \delta^-(U)$ and $w(\delta^-(v))\leq w(\delta^-(U))$ because the weights are nonegative. We can conclude by taking the minimum of both sides of the later inequality.
\end{proof}
In this case, we denote by $\delta_{G, r, w}$, or $\delta_{r, w}$ if  there is no ambiguity, the value of the $w$-minimum trivial $r$-cut in a digraph $G$ for a given weight function $w\in \mathbb{R}_+^{E(G)}$.
We can deduce here below how to compute a maximum packing of rooted arborescences in acyclic digraphs. Recall that an acyclic digraph is rooted at a vertex $r$ if the indegree of $r$ is zero.
\begin{cor}
Given an acyclic digraph $G$ rooted at the vertex $r\in V(G)$, and a weight function $w\in \mathbb{R}_+^{E(G)}$, then the $w$-maximum packing of $r$-arborescences can be computed in $O(m^2)$ where $m = |E(G)|$.
\end{cor}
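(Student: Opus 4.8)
The plan is to split the computation into its two halves: the \emph{value} of the optimal packing, which is immediate from the results already established, and the actual \emph{construction} of the arborescences together with their multipliers, which is where the work lies. By Theorem 2.1 together with Lemma 3.3, the value of the $w$-maximum packing of $r$-arborescences equals $w(\delta_{r,w}) = \min_{v\neq r} w(\delta^-(v))$; call this quantity $k^*$, and note it is found in $O(m)$ time by scanning the incoming stars. What remains is to produce a family of $r$-arborescences $T_1,\dots,T_p$ with nonnegative multipliers $\lambda_1,\dots,\lambda_p$ satisfying $\sum_{j:\,e\in T_j}\lambda_j \le w(e)$ for every edge $e$ and $\sum_j \lambda_j = k^*$.

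The structural observation I would isolate first is that, in an acyclic digraph rooted at $r$, \emph{any} choice of exactly one incoming edge at each vertex $v\neq r$ already yields an $r$-arborescence. Such a choice produces a subgraph $H$ with $n-1$ edges in which $r$ has indegree $0$ and every other vertex has indegree $1$; since $H$ inherits acyclicity from $G$, following the unique incoming edge backward from any vertex cannot repeat a vertex and must terminate at the only indegree-$0$ vertex $r$, so every vertex is reached from $r$ and $H$ is a spanning $r$-arborescence. The decisive point is that the edge choices at distinct vertices are \emph{independent}: there is no global consistency condition to satisfy, which is precisely what acyclicity buys us.

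With this in hand I would run a greedy peeling. Maintain residual capacities $c(e)$, initialised to $w(e)$, and a remaining target $K$, initialised to $k^*$, under the invariant that $\sum_{e\in\delta^-(v)} c(e) \ge K$ for every $v\neq r$; this holds initially by the definition of $k^*$. While $K>0$, the invariant forces at least one incoming edge of positive residual capacity at each vertex, and selecting one such edge $e_v$ per vertex gives, by the structural observation, an $r$-arborescence $T$. Set $\lambda = \min\bigl(K,\ \min_{v\neq r} c(e_v)\bigr)$, append $(T,\lambda)$ to the packing, and subtract $\lambda$ from each selected $c(e_v)$ and from $K$. Since the incoming sum at each vertex and $K$ both drop by exactly $\lambda$, the invariant is preserved; capacities stay nonnegative, so the packing never exceeds $w$; and the running total of the multipliers equals $k^*-K$, so it reaches $k^*$ exactly when $K$ hits $0$.

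For termination and complexity, observe that each iteration either drives $K$ to $0$ or saturates at least one edge, and a saturated edge, having zero residual, is never selected again; hence there are at most $m+1$ iterations, and in particular the constructed packing uses only $O(m)$ distinct arborescences. Each iteration scans the incoming stars once to pick the edges and compute $\lambda$, which is $O(m)$ work, for a total of $O(m^2)$. The main subtlety to get right is the termination/feasibility argument, namely verifying that the invariant \emph{by itself} guarantees a selectable positive-capacity edge at every vertex at every step, so that the greedy can never stall with $K>0$; I would emphasise that this is exactly the place where acyclicity is essential, since for general digraphs the independence of the per-vertex choices fails and one would be forced back onto genuine arborescence-packing machinery rather than this single-pass peeling.
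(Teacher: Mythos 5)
Your proposal is correct, and at its core it is the same greedy peeling as the paper's proof: both rest on the structural fact that in an acyclic digraph whose unique indegree-zero vertex is $r$, choosing one incoming edge at every vertex $v\neq r$ automatically yields a spanning $r$-arborescence, and both extract at most roughly $m$ arborescences at $O(m)$ cost each, giving $O(m^2)$. The differences are in the bookkeeping, and they are worth noting. The paper forces the globally minimum-weight edge $e_0$ into the current arborescence, sets $\lambda_i=w(e_0)$, and proves optimality by induction on $|\mathrm{support}(w)|$ via the identity $\delta_{r,w}=\delta_{r,w'}+w(e_0)$; you instead fix the target $k^*=\min_{v\neq r}w(\delta^-(v))$ in advance and carry the invariant $\sum_{e\in\delta^-(v)}c(e)\geq K$ through the iterations, which makes optimality immediate from Theorem 2.1 and Lemma 3.3 (the total multiplier equals the minimum trivial $r$-cut, which upper-bounds any packing) with no induction at all. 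Your version is also more robust on one concrete point: the paper's algorithm as literally written stops whenever $\min\{w(e):e\in E\}=0$ (step (3)), so an input containing even one zero-weight edge terminates immediately with an empty packing although the minimum trivial cut, and hence the optimum, may be positive; your invariant guarantees a positive-capacity incoming edge at every vertex while $K>0$, and your selection rule never touches saturated edges, so the procedure cannot stall --- exactly the subtlety you flagged as essential. Finally, your multiplier $\lambda=\min(K,\min_{v\neq r}c(e_v))$ is taken over the selected edges only rather than over all edges as in the paper, but since each iteration either finishes or saturates a selected edge, the iteration count is still at most $m+1$ and the $O(m^2)$ bound goes through unchanged.
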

\begin{proof}
We give here an $O(m^2)$ algorithm for this purpose.
\newline \textbf{Input:} An acyclic digraph $G = (V\cup \{r\}, E)$ rooted at $r$, and $w\in \mathbb{R}_+^{E(G)}$.
\newline \textbf{Output:} Pairs $(\lambda_1, A_1), ...,(\lambda_t, A_t)$ of positive reals and $r$-arborescences such that $\sum_{i=1}^{t} \lambda_i A_i (e)\leq w(e)$, for any edge $e\in E$ and $\sum_{i=1}^{t} \lambda_i$ is maximum.
\newline (1) Set $i := 1$.
\newline (2) Find $e_0\in E$ such that $w(e_0) = Min\{w(e) : e\in E\}$.
\newline (3) If $w(e_0) > 0$ then do:
\newline (3.1) set $\lambda_i := w(e_0)$,
\newline (3.2.1) if we can complete $\{e_0\}$ to an $r$-arborescence $A$ by picking one edge $e\in \delta^-(v)$ for all $v\in V$ such that $w(e) > 0$, then  $A_i := A$,
\newline (3.2.2) else: stop.
\newline (3.3) set $w(e) := w(e)-w(e_0)$ for all $e\in A_i$, and set $i := i+1$,
\newline (3.4) go to (2).
\newline (4) End of algorithm.
\newline Since, in each iteration, there is at least one edge $e_0$ whose weight becomes zero (step 3.3), then we need at most $m$ iterations to get the output. In each iteration, we need to get the minimum weight edge in $O(m)$ (step 2), to complete one edge to a rooted arborescence in $O(n)$ (step 3.2.1) , where $n = |V|$ (because the given graph is acyclic so we need only to choose an inedge for any vertex if it exists), and to update the weight function in $O(n)$ (step 3.3). So the whole running time is $O(m^2)$.
\newline Now, for the correctness of the algorithm, we will prove it by induction on the cardinality of support($w$) = $\{e\in E$ such that $w(e)>0\}$. Without loss of generality, we can suppose that $support(w)$ contains an $r$-arborescence because otherwise there is no packing. If $|support(w)|=n-1$ then $\lambda =min\{ w(e)$ such that $e\in E\}$ and $A=support(w)$ give the solution. Suppose now that $support(w)\geq n$.
\newline Let $w'\in \mathbb{R}_+^{E(G)}$ be the weight function obtained from $w$ after the first iteration of the algorithm, i.e., after step (3.3) for the first time. It is clear that $|support(w')|\leq |support(w)|-1$. It follows that $w'=\sum_{i=2}^{t} \lambda_i A_i$ is a $w'$-maximum packing of $r$-arborescences in $G$. According to Theorem 2.1 and Lemma 3.3, $\delta_{r, w'}=\sum_{i=2}^{t} \lambda_i$. Now, for any $v\in V$, $w(\delta^-(v))=w'(\delta^-(v))+w(e_0)$. By taking the minimum of both sides, we have $\delta_{r, w}= \delta_{r, w'}+w(e_0)=\sum_{i=2}^{t} \lambda_i + \lambda_1$, which means that our packing is maximum.
\end{proof}

\section{BRP is polynomial in (simple and 2-connected) outerplanar graphs}

According to the previous section, it suffices now to prove that BRP and rBRP are polynomial in simple and 2-connected outerplanar graphs. First, we solve BRP and rBRP in circuits, which are basic simple and 2-connected outerplanar graphs.
\begin{lem}
BRP and rBRP can be solved in $C_n$, the circuit on $n$ vertices, with $n\geq 2$, in linear time $O(n)$.
\end{lem}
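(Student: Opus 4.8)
The plan is to classify all $r$-acyclic orientations of a cycle, then use the min-cut characterization to reduce the packing value of each orientation to a quantity depending only on the chosen sink, and finally organize the resulting maximization so that it runs in linear time. First I would dispose of the degenerate case $n=2$: here $C_2$ is a pair of parallel edges, so Proposition 3.1(1) reduces it to a single edge, for which the packing is just the combined weight and the $O(n)$ bound is immediate. For the rest assume $n\ge 3$, so $C_n$ is a genuine simple cycle with $n$ edges.

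The first structural step is to describe the $r$-acyclic orientations of $C_n$. Every vertex has degree $2$, and $r$ must be the unique indegree-$0$ vertex, hence the unique source (both incident edges directed out). Encoding the orientation by a cyclic $\pm 1$ sequence on the edges, the sources are the ``$-+$'' transitions and the sinks the ``$+-$'' transitions, and in any cyclic binary sequence these two counts are equal; since an acyclic orientation is not constant, there is then exactly one sink $s\ne r$, and the two internally disjoint $r$--$s$ paths of the cycle are each directed from $r$ toward $s$. Conversely each choice of sink $s\in V\setminus\{r\}$ yields such an orientation, so the $r$-acyclic orientations of $C_n$ are in bijection with the $n-1$ possible sinks.

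Next I would invoke Theorem 2.1 together with Lemma 3.3 to replace the maximum packing of each orientation by its minimum trivial $r$-cut $\delta_{r,w}$, and then identify these cuts explicitly. In the orientation with sink $s$, the root has indegree $0$, the sink $s$ has indegree $2$ (both its edges are incoming), and every other vertex has indegree exactly $1$; since the heads partition the edge set, the trivial-cut values at the non-root vertices are precisely $w(e)$ for every edge $e$ not incident to $s$, together with the single value $w(e_1(s))+w(e_2(s))$, where $e_1(s),e_2(s)$ denote the two edges at $s$. This collection is visibly independent of $r$, so writing $k_d(s)$ for the packing of the orientation with sink $s$,
\[
k_d(s)=\min\!\Bigl(\min_{e\ne e_1(s),\,e_2(s)} w(e),\; w(e_1(s))+w(e_2(s))\Bigr),
\]
and rBRP amounts to maximizing $k_d(s)$ over $s\ne r$, while BRP maximizes it over all $s$ (any $r\ne s$ then serving as a valid root).

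Finally I would make the maximization linear. Evaluating $k_d(s)$ directly costs $O(n)$ per sink and $O(n^2)$ overall, so the main obstacle is the speed-up. The remedy is that each sink removes only two edges from the inner minimum, so it suffices to precompute, in a single $O(n)$ pass, the three edges of smallest weight: for any sink at least one of them survives, and the smallest surviving one equals $\min_{e\ne e_1(s),\,e_2(s)} w(e)$ in $O(1)$. Combined with the $O(1)$ evaluation of $w(e_1(s))+w(e_2(s))$, this yields $k_d(s)$ for every $s$, and hence both $k(C_n,w,r)$ and $k(C_n,w)$, in total time $O(n)$. I expect the cut-structure identification and the three-smallest-edges bookkeeping to be the only delicate points; the sanity check on the paper's $C_4$ example (where the formula correctly returns $1$) confirms the reduction, and everything else is routine.
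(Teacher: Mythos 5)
Your proposal is correct and follows essentially the same route as the paper: both reduce the problem to choosing the unique sink $s$ of an $r$-acyclic orientation of the cycle, replace the packing value by the minimum trivial cut via Theorem 2.1 and Lemma 3.3, and achieve linear time by precomputing the three lightest edges. The only difference is the finish: the paper pins down the optimal sink through a case analysis on whether $e_2$ is adjacent to $e_1$ and states closed-form formulas for $k(G,w)$, whereas you evaluate $k_d(s)$ in $O(1)$ for each candidate sink and maximize explicitly --- a slightly more uniform bookkeeping that also handles the fixed-root constraint $s\neq r$ of rBRP cleanly.
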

\begin{proof}
Let $G = C_n$ be a circuit on $n$ vertices with $n\geq 2$, and a weight function $w\in \mathbb{R}_+^{E(G)}$. Since the case $n = 2$ is trivial then we can suppose that $n\geq 3$. First we find the three $w$-minimum edges $w(e_1)\leq w(e_2)\leq w(e_3)\leq w(e)$, for any $e\in E(C_n)\backslash \{e_1, e_2, e_3\}$.
\newline Given $r\in V(G)$, it is not difficult to see that, in any $r$-acyclic orientation of $G$, there is exactly one sink $s\in V(G)$, i.e., a vertex with outdegree equal to zero (or $\delta^+ (\{ s\})=\O$). We say that such a sink is optimal if it is induced by an optimal $r$-acyclic orientation. According to Theorem 2.1 and Lemma 3.3, the value of the $w$-maximum packing of $r$-arborescences equals the value of the $w$-minimum trivial $r$-cut. Thus, there is an edge $e$, where $w(e)=w(e_1)$ (if $w(e_1)<w(e_2)$ then necessary $e=e_1$), which is incident to any (optimal) sink $s$ for any optimal $r\in V$ and any optimal $r$-acyclic orientation of $G$, because if $OE$ (respectively, $O'E$) is an $r$-acyclic orientation of $G$ such that there is such an edge (respectively, all such edges) $e$ is (respectively, are) incident (respectively, not incident) to the unique sink of $OG=(V, OE)$ (respectively, of $O'G=(V, O'E)$), then $k(G, w)\geq k(G, OE, w, r)=\delta_{OG, r, w}\geq w(e_2)\geq w(e_1)=\delta_{O'G, r, w}=k(G, O'E, w, r)$. Without loss of generality, we can suppose that $e=e_1$. We have then the following cases:
\newline \textbf{Case 1:} If $e_2$ is adjacent to $e_1$ then the optimal sink $s$ should be incident to both edges, $k(G, w) = Min\{w(e_1)+w(e_2), w(e_3)\}$ and any vertex distinct from $s$ can be an optimal root.
\newline \textbf{Case 2:} If $e_2$ is not adjacent to $e_1$ then let $e$ and $f$ be the two edges adjacent to $e_1$ such that $w(e)\leq w(f)$. Thus, the optimal sink $s$ should be incident to $e_1$ and $e$, $k(G, w) = Min\{w(e_1)+w(e), w(e_2)\}$ and any vertex distinct from $s$ can be an optimal root.
\end{proof}
For general simple and 2-connected outerplanar graphs, we need the following results.
\begin{lem}
Let $G$ be a simple and 2-connected outerplanar graph with $k$ chords. If $S_2(G)$ is the maximum number of sinks with degree 2, for all $r\in V$ and all $r$-acyclic orientations of $G$, then $S_2(G)\leq k+1$.
\end{lem}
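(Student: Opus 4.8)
The plan is to work entirely with the orientation induced on the \emph{outer circuit} $C$, which in a simple $2$-connected outerplanar graph is a spanning (Hamiltonian) cycle carrying every vertex. First I would record the elementary structural fact that, since each join vertex has degree at least $3$ (two outer-circuit edges plus at least one chord), the degree-$2$ vertices of $G$ are exactly the interior vertices of the series outer chords; in particular both edges incident to a degree-$2$ vertex lie on $C$. Now fix a root $r$ and an $r$-acyclic orientation realizing $S_2(G)$, and call a vertex a \emph{$C$-sink} (respectively a \emph{$C$-source}) if both of its outer-circuit edges point into it (respectively out of it). A degree-$2$ sink of $G$ is then precisely a degree-$2$ $C$-sink, so the number $s_2$ of degree-$2$ sinks in this orientation is at most the total number of $C$-sinks.

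The next step is the standard alternation fact: traversing $C$ and recording each edge as agreeing or disagreeing with the direction of travel, the $C$-sinks and $C$-sources occur alternately, hence are equal in number; call this common value $p$. Since the global orientation is acyclic, $C$ cannot be a directed cycle, so $p \ge 1$. This already gives $s_2 \le p$, and it remains to prove $p \le k+1$.

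The heart of the argument is to bound $p$ by the number of chords, using the fact that a $C$-source must receive its indegree through a chord. Indeed, a degree-$2$ $C$-source would have indegree $0$ in $G$ and hence must be the root $r$; and any join vertex that is a $C$-source but is not $r$ has both outer-circuit edges leaving it, so its required positive indegree must come from a chord, making it the head of some chord. As the $k$ chords have at most $k$ distinct heads, the number of non-root join-vertex $C$-sources is at most $k$. I would then finish by the bookkeeping on $r$: it is either a degree-$2$ vertex or a join vertex, but not both. If $r$ is degree $2$, then no join-vertex $C$-source equals $r$ while at most one $C$-source (namely $r$ itself) is degree $2$, so at least $p-1$ of the $C$-sources are non-root join vertices, giving $p-1 \le k$; if $r$ is a join vertex, there are no degree-$2$ $C$-sources at all and at most one $C$-source equals $r$, again giving $p-1 \le k$. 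Either way $p \le k+1$, whence $S_2(G) = s_2 \le p \le k+1$.

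The main obstacle I anticipate is precisely this final accounting. A careless count subtracts one unit for a possible degree-$2$ $C$-source and a further unit for a possible join-vertex $C$-source equal to $r$, which would only yield the weaker bound $p \le k+2$. The point that rescues the sharp estimate is that a single vertex $r$ cannot simultaneously have degree $2$ and be a join vertex, so at most one unit is ever lost; this is why the split into the two cases for $r$ is essential rather than cosmetic. A small concrete check — a cycle with a single chord, where two degree-$2$ sinks can be realized but a third forces an endpoint of the chord to have indegree $0$ — confirms both that $k+1$ is tight and that the careful case analysis is needed.
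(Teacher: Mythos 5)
Your proof is correct, but it takes a genuinely different route from the paper's. The paper argues by induction on the number $k$ of chords: it picks a degree-$2$ sink $s$, takes the last chord $e$ on a directed $r$--$s$ path, reverses the orientation of the subpath from the join vertex $v$ of $e$ to $s$ to get a new $r$-acyclic orientation, and then deletes $e$ and invokes the induction hypothesis on $G\backslash e$, splitting into cases according to whether $v$ has become a sink. Your argument is instead a direct, non-inductive count on the outer Hamiltonian cycle $C$: degree-$2$ sinks are exactly degree-$2$ $C$-sinks, $C$-sinks and $C$-sources alternate around $C$ and hence are equinumerous, and every $C$-source other than $r$ must be the head of some chord, so there are at most $k$ of them plus possibly $r$ itself; your case split on whether $r$ has degree $2$ or is a join vertex is exactly what keeps the bound at $k+1$ rather than $k+2$. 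What your approach buys: it avoids the paper's reorientation step (whose preservation of $r$-acyclicity is asserted rather than verified, and is the most delicate point of the paper's proof); it bounds the number of degree-$2$ sinks in \emph{every} $r$-acyclic orientation, not just the maximizing one; and the same head-counting extends almost verbatim to the weighted statement of Lemma 3.7, since a sink of degree $i$ is a $C$-sink that is moreover the head of $i-2$ chords, which yields $\sum_i (i-1)s_i(G)\leq k+1$ in one stroke. What the paper's approach buys is methodological uniformity: the chord-deletion induction it sets up here is the same engine later driving Lemmas 3.7 and 3.8 and the correctness proof of Theorem 3.10.
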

\begin{proof}
Let $r\in V$ and $OG=(OE, V)$ be a root and an $r$-acyclic orientation for which the number of sinks with degree 2 is maximum, i.e., equals $S_2(G)$. We will prove that $S_2(G)\leq k+1$ by induction on $k$.
\newline If $k=0$ then $G$ is a circuit and trivially any acyclic orientation contains exactly one sink.
\newline Suppose now that $k\geq 1$.
\newline Note that all chords of $G$ do not contain sinks of degree 2 in $OG$.
\newline Let $C$ be an outer circuit of $G$ and $s\in V(C)$ be a sink of degree 2 in $OG$. It follows that there is a directed $rs$-path in $OG$ passing through at least one chord $e$ (Otherwise if no chord is used in any $rs$-path, for all sinks $s\in V$ of degree 2, we can remove this chord and, by induction on $G\backslash e$, we get the inequality). We choose $e$ as the last chord in that $rs$-path. Let $v$ be the last join vertex of $e$ in that $rs$-path. Let now $O'G$ be the $r$-acyclic orientation of $G$ obtained from $OG$ by reversing the orientation of all edges of the unique $vs$-path and maintaining the orientation of all remaining edges.
\newline {\bf Case 1:} $v$ is a sink in $O'G$.
\newline It follows that $r$ and $O'G\backslash e$ are optimal for $G\backslash e$ regarding the number of sinks of degree 2, because, otherwise, we will get a better maximum for $G$, which contradicts our assumption that $r$ and $OG$ induces a maximum number of sinks of degree 2. Thus $S_2(G)=S_2(G\backslash e)\leq k \leq k+1$.
\newline {\bf Case 2:} $v$ is not a sink in $O'G$.
\newline It follows that the number of sinks of degree 2 in $O'G\backslash e$ is $S_2(G)-1$. Thus $S_2(G)-1\leq S_2(G\backslash e)\leq k$, and then, $S_2(G)\leq k+1$.
\end{proof}
\begin{lem}
Let $G$ be a simple and 2-connected directed acyclic outerplanar graph rooted at $r\in V(G)$ with $k$ chords, and $s_i(G)$, $i=2,3, ..., \Delta(G)$ ($\Delta (G)$ is the maximum degree of $G$), be the number of sinks in $G$ with degree $i$. Then: $$\sum_{i=2}^{\Delta (G)} (i-1)s_i(G)\leq k+1.$$
\end{lem}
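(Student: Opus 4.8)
The plan is to exhibit a single spanning $r$-arborescence and charge each sink to a private set of non-tree edges. First I would record the two structural facts that make the count work. Since $G$ is simple, $2$-connected and outerplanar, its outer circuit is a Hamiltonian circuit on all $n$ vertices, so $|E(G)| = n + k$: the $n$ outer-circuit edges together with the $k$ chords. And since the orientation is acyclic with $r$ the unique vertex of indegree $0$, every vertex is reachable from $r$ (trace any vertex backwards along its in-edges; in a finite acyclic digraph this terminates at a vertex of indegree $0$, which must be $r$). Hence $G$ contains a spanning $r$-arborescence $A$, directed away from $r$, with exactly $n-1$ tree edges, leaving exactly $|E(G)|-(n-1) = k+1$ non-tree edges.

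Next I would analyze a sink $v$ of degree $i$. Because $v$ is a sink it has no out-edges, so in $A$ it is a leaf whose only incident tree edge is its parent edge; the remaining $i-1$ edges at $v$ are therefore non-tree edges, and each of them, being incident to the sink $v$, is oriented into $v$, i.e.\ has $v$ as its head. Thus $v$ is the head of exactly $i-1$ distinct non-tree edges. Observe also that no sink can equal $r$ (a sink has indegree equal to its degree, which is $\ge 2 > 0$), so the parent edge always exists and the count $i-1$ is correct.

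Finally I would assemble the bound. Every edge has a unique head, so the $i-1$ non-tree edges charged to a degree-$i$ sink are charged to no other sink; the families of non-tree edges associated with distinct sinks are pairwise disjoint. Summing over all sinks gives $\sum_{i=2}^{\Delta(G)} (i-1)s_i(G) = \#\{\text{non-tree edges whose head is a sink}\} \le \#\{\text{non-tree edges}\} = k+1$, which is exactly the claimed inequality. The only points needing care—neither deep—are the identity $|E(G)| = n+k$, which rests on the outer circuit being Hamiltonian in a $2$-connected outerplanar graph, and the reachability argument guaranteeing the spanning arborescence. Once these are in place the charging argument is immediate, so I do not expect a serious obstacle; if anything, the subtle point is simply recognizing that a sink contributes its $i-1$ excess edges as \emph{heads}, which is what forces disjointness.
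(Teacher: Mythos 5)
Your proof is correct, and it takes a genuinely different route from the paper. The paper proves the lemma by induction on the number of sinks of degree at least $3$: it locates a sink $v_0$ of degree $i_0\geq 3$, deletes the $i_0-2$ chords entering it (turning $v_0$ into a degree-$2$ sink), applies the induction hypothesis to the smaller outerplanar graph, and bootstraps from the preceding lemma of the paper (Lemma 3.5, that the number of degree-$2$ sinks is at most $k+1$, itself proved by a separate induction on $k$ with a path-reversal argument). Your argument instead is a direct, global charging: reachability from the unique indegree-$0$ vertex in an acyclic digraph yields a spanning $r$-arborescence with $n-1$ edges, the Hamiltonicity of the outer circuit gives $|E(G)|=n+k$ and hence exactly $k+1$ non-tree edges, and each degree-$i$ sink is the head of exactly $i-1$ non-tree edges (all its incident edges except its unique parent edge), with disjointness forced by uniqueness of heads. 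This buys several things: it is self-contained (it does not need Lemma 3.5 and in fact implies it, since $s_2(G)\leq \sum_i(i-1)s_i(G)$), it avoids both inductions, and it exposes the true nature of the bound --- the right-hand side is really the cyclomatic number $|E|-|V|+1$, so your argument proves the analogous inequality for an arbitrary rooted acyclic digraph, with outerplanarity entering only through the edge count $|E|=n+k$. The paper's inductive proof, by contrast, is tied to the outerplanar chord structure and inherits the delicacy of the Lemma 3.5 induction. The two facts you flag as needing care (Hamiltonicity of the outer circuit of a simple $2$-connected outerplanar graph, and existence of the spanning arborescence) are both standard and correctly justified in your sketch.
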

\begin{proof}
By induction on $S(G)=\sum_{i=3}^{\Delta (G)} s_i(G)$.
\newline If $S(G)=0$ then, according to Lemma 4.2, $\sum_{i=2}^{\Delta (G)} (i-1)s_i=s_2(G)\leq k+1$.
\newline If $S(G)\geq 1$ then there exists $i_0\in \{3, ..., \Delta(G)\}$ such that $s_{i_0} (G)\geq 1$. It follows that there is a sink $v_0$ in $G$ with degree $i_0\geq 3$. Let $P$ be the union of all chords adjacent to $v_0$ in $G$. Note that all chords in $P$ are one-way directed to $v_0$. Note also that the number of chords in $P$ is $i_0-2$.
\newline Let $G'=G\backslash P$. Since $G'$ is still $r$-acyclic, then by induction on $S(G')=S(G)-1$, we have $\sum_{i=2}^{\Delta (G')} (i-1)s_i(G')\leq k-i_0+3$ because there are $k-(i_0-2)$ chords in $G'$. On the other hand, since all $s_i(G)=s_i(G')$ except $s_{i_0}(G)=s_{i_0}(G')+1$, and $s_2(G)=s_2(G')-1$ ($v_0$ becomes a sink of degree 2 in $G'$), we have $\sum_{i=2}^{\Delta (G)} (i-1)s_i(G)=\sum_{i=2}^{\Delta (G')} (i-1)s_i(G')+(i_0-1)-1\leq (k-i_0+3)+(i_0-2)=k+1$, and we are done.
\end{proof}
To solve BRP and rBRP in (simple and 2-connected) outerplanar graphs, we need to give some definitions. A mixed graph $G=(V, E_u, E_d)$ is a graph where the edges of $E_u$ are undirected and the (remaining) edges of $E_d$ are directed. In this case, we use the notations $\delta_G (\{ v \}) \subseteq E_u$ and $\delta_G^- (\{ v \}) \subseteq E_d$. A partial $r$-acyclic orientation of an undirected graph $G=(V, E)$, where $r\in V(G)$, is a mixed graph $G'=(V, E_u\cup E_d)$ such that: (1) The underlying graph of $G'$ is $G$; (2) There exists an orientation $OE_u$ of $E_u$ such that the obtained graph $G''=(V, OE_u\cup E_d)$ is an $r$-acyclic orientation of $G$ 
\newline Note that any undirected graph and any of its acyclic orientations are particular cases of partial acyclic orientations ($E_d=\O$ and $E_u=\O$, respectively).
\newline We call an undirected edge of a partial acyclic orientation a {\em forced edge} if it has one unique possible orientation in order to get (at the end) an acyclic orientation. For example, if we choose the root $r$, then all edges incident to $r$ are forced because they should be oriented from $r$ to its adjacent vertices in order to get at the end an $r$-acyclic orientation. Another example is when we choose a sink of degree 2 in an outer chord, then all edges of this outer chord are forced because they should be oriented in the direction of that sink (except if we have a root of degree 2 in the same outer chord). In opposition, a non forced edge is called a {\em free} edge. We have then the following condition for forced edges.
\begin{lem}
Let $G'=(V, E_u\cup E_d)$ be a partial $r$-acyclic orientation of an undirected simple and 2-connected outerplanar graph $G=(V, E)$ with $k$ chords and $r\in V$. If $$\sum_{v\in V} (|\delta_{G'}^-(v)|-1) = k+1,$$ then all edges of $E_u$ are forced.
\end{lem}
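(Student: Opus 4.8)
The plan is to read off the combinatorial content of the hypothesis and reduce the whole statement to the uniqueness of the root orientation of a tree. First I would partition the vertices by their indegree in the partial orientation: set $B=\{v\in V:\delta_{G'}^-(v)=\O\}$ (note $r\in B$) and $A=V\setminus B$. Condition (3) in the definition of a partial $r$-acyclic orientation says that every vertex of $A$ has \emph{all} of its incident edges directed; consequently an edge can remain undirected only when both of its endpoints lie in $B$, so $E_u=E(G[B])$. The same condition shows that every edge joining $A$ and $B$ is directed, and its head, having positive indegree, must lie in $A$; hence every such edge points from $B$ to $A$. The upshot of this first step is that a vertex of $B$ can acquire an incoming edge only from inside $G[B]$.

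Next I would convert the numerical hypothesis into structural information. Since each directed edge contributes $1$ to exactly one indegree, $\sum_{v}|\delta_{G'}^-(v)|=|E_d|=|E|-|E_u|$, and $|E|=n+k$ because $G$ is simple and $2$-connected outerplanar (an outer circuit on $n=|V|$ edges together with its $k$ chords). Reading the stated sum, as in Lemma 3.6, over the vertices that actually carry incoming edges gives $\sum_{v\in A}(|\delta_{G'}^-(v)|-1)=|E_d|-|A|$, and substituting $|A|=n-|B|$ collapses the equation $=k+1$ to the single identity $|E_u|=|B|-1$. Thus $G[B]$ has exactly $|B|-1$ edges. I would then prove $G[B]$ connected: if some component $K$ of $G[B]$ avoided $r$, then in any completion the acyclic orientation restricted to $K$ would possess a source $s\neq r$, whose only possible incoming edges lie inside $K$ by the first step, so $s$ would have indegree $0$ in the completed orientation, contradicting that $r$ is the unique indegree-$0$ vertex. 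A connected graph on $|B|$ vertices with $|B|-1$ edges is a tree, so $G[B]$ is a tree containing $r$.

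Finally I would invoke uniqueness of the arborescence orientation of a tree. In any completion $G''$ each non-root vertex of $B$ must gain indegree $\geq 1$, and by the first step all of these incoming edges come from $E_u=E(G[B])$; counting the $|B|-1$ edges against the $|B|-1$ non-root vertices forces each of them to have indegree exactly $1$ while $r$ has indegree $0$. On a tree this pins down the orientation in which every edge points away from $r$, which is unique: strip a leaf $\ell\neq r$, whose single edge must point into it, and induct. Hence every edge of $E_u$ has a unique admissible orientation, i.e.\ is forced. The step I expect to be the main obstacle is the middle one, namely extracting the clean identity $|E_u|=|B|-1$ from the stated sum and then rigorously excluding components of $G[B]$ disjoint from $r$, since it is exactly this connectivity claim that upgrades ``$|B|-1$ edges'' to ``tree'' and makes the final uniqueness argument available.
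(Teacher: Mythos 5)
Your proof is correct and complete, but it takes a genuinely different route from the paper's. The paper proves this lemma by induction on the number of chords $k$, with a case analysis: if some vertex of indegree at least $2$ is a join vertex of a chord, it deletes that chord and applies the induction hypothesis; otherwise every vertex contributing to the sum is a sink of degree $2$, and it invokes the sink-counting bound $S_2(G)\leq k+1$ (Lemma 3.6) to conclude that the outer circuit, and then the chords, are forced. Your argument is instead direct and non-inductive: partition $V$ into $B$ (indegree zero) and $A$ (positive indegree), observe from condition (3) that $E_u=E(G[B])$ and that all edges between $A$ and $B$ point into $A$, translate the numerical hypothesis into $|E_u|=|B|-1$, prove $G[B]$ is connected (hence a tree) using the unique-source property of any completion, and finish with the uniqueness of the away-from-root orientation of a tree. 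Two remarks. First, both you and the paper must read the displayed sum as ranging only over vertices with at least one incoming edge: taken literally over all of $V$ it equals $|E_d|-|V|\leq k<k+1$, so the hypothesis would be unsatisfiable; you make this reading explicit, while the paper does so only implicitly through its notion of ``contributing'' vertices. Second, your proof is both tighter and more general: it avoids the transfer of forcedness between $G\setminus e$ and $G$ on which the paper's induction silently leans, it does not need Lemmas 3.5--3.7 at all, and it uses outerplanarity only through the identity $|E|=|V|+k$, so it in fact establishes the statement for an arbitrary graph with $k:=|E|-|V|$. What the paper's inductive approach buys in exchange is that it stays within the chord/outer-chord vocabulary of the main algorithm (Theorem 3.10) and exhibits concretely which edges become forced, which is the form in which the lemma is applied there.
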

\begin{proof}
We consider the function $f(G')=\sum_{v\in V} (|\delta_{G'}^-(v)|-1)$. We will prove by induction on $k$, that if $f(G')=k+1$ then all edges of $E_u(G')$ are forced.
\newline If $k=0$ then $G$ is a circuit and the edges become forced if we choose the root and the sink, i.e., if $f(G')=1$.
\newline Suppose now that $k\geq 1$ and $f(G')=k+1$.
\newline {\bf Case 1:} There exists $v\in V$ such that $|\delta_{G'}^-(v)|\geq 2$ and $|\delta_G(v)|\geq 3$.
\newline Let $e$ be a chord for which $v$ is a join vertex.
\newline {\bf Subcase 1.1:} $e\in \delta_{G'}^-(v)$.
\newline It follows that $G'_e=G'\backslash e$ is a partial $r$-acyclic orientation of $G_e=G\backslash e$ with $f(G'_e)=k$. Thus all remaining edges of $G'_e$ are forced.
\newline {\bf Subcase 1.2:} $e\in \delta_{G'}^+(v)$.
\newline Let $u\neq v$ be the second join vertex of $e$ then $e\in \delta_{G'}^-(u)$. If $u$ contributes to $f(G')$ then $|\delta_{G'}^-(u)|\geq 2$ and $|\delta_G (u)|\geq 3$, and we are in Subcase 1.1. Otherwise, $G'$ is not acyclic, a contradiction.
\newline {\bf Case 2:} For all $v\in V$ such that $|\delta_{G'}^-(v)|\geq 2$, we have: $|\delta_G(v)|=2$.
\newline It follows that all vertices contributing to $f(G')$ are sinks of degree 2. Since the contribution of each such sink is 1 then their number is $S_2(G)=f(G')=k+1$, according to Lemma 4.2.
\newline Since all chords containing sinks of degree 2 in $G'$ are outer chords, then it is not difficult to see that the number of outer chords is $k+1$ because chords should be incident successively. So all undirected edges of the outer circuit are forced. Suppose now, by contradiction, that there exists a free undirected edge for some chord $e$. So $G''=G'\backslash e$ is a partial $r$-acyclic orientation of $G\backslash e$ with $k-1$ chords and $S_2(G\backslash e)=f(G'')=k+1$, a contradiction with Lemma 4.2.
\end{proof}
\begin{cor}
The subgraph formed by all forced chords is a branching, i.e., a directed subgraph where all vertices have an indegree at most 1 except the root.
\end{cor}
\begin{proof}
Direct from the fact that there is no contribution to the function $f$ (introduced in the last proof) induced by forced edges, so all vertices incident to inedges of these forced edges should have indegree 1 in the (full) acyclic orientation (at the end).
\end{proof}
Now, we are ready for the main result of this paper.
\begin{thm}
BRP and rBRP can be solved in simple and 2-connected outerplanar graphs in polynomial time.
\end{thm}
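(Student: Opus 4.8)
The plan is to fix a root and reduce everything to the problem of \emph{finding} an optimal $r$-acyclic orientation, then to compute that orientation by a dynamic program over the chord structure of the outerplanar graph. First I would invoke Proposition 3.1 to assume $G$ is simple, Proposition 3.2 to assume $G$ is $2$-connected, and the $O(n)$ reduction of BRP to rBRP to fix a root $r\in V$; it then suffices to solve rBRP on a simple $2$-connected outerplanar graph with a prescribed root. By Theorem 2.1 together with Lemma 3.3, $k(G,w,r)$ equals the maximum, over all $r$-acyclic orientations, of the minimum trivial $r$-cut $\min_{v\neq r} w(\delta^-(v))$. Moreover, by Corollary 3.4 it is enough to produce the \emph{optimal orientation itself}: once it is known, the packing realizing $k(G,w,r)$ is computed in $O(m^2)$. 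So the whole difficulty is to select, among the $r$-acyclic orientations (whose number can be exponential, cf. Proposition 2.2), one that maximizes $\min_{v\neq r} w(\delta^-(v))$, in polynomial time.

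The main idea is that although there are exponentially many orientations, an optimal one is determined by a small amount of \emph{local} data that can be assembled by a divide-and-conquer recursion along the chords. Since $G$ is outerplanar, its chords are pairwise non-crossing and hence nested, so each chord $e=uv$ splits $G$ into two smaller $2$-connected outerplanar graphs $G_1,G_2$ glued along $e$ and its endpoints $u,v$; recursing yields a decomposition tree with $O(k)=O(n)$ nodes whose leaves are chordless regions, i.e. circuits, solved by Lemma 3.5 in linear time. This $2$-sum decomposition behaves with respect to the objective much as the $1$-sum decomposition does in Proposition 3.2: every trivial cut $\delta^-(v)$ with $v$ interior to one side lies entirely in that side, and for a shared vertex the incoming weight is the sum of the contributions from the two sides. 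I would therefore carry, at each chord, a dynamic-programming state consisting of the orientation of $e$ together with, for each of $u$ and $v$, the accumulated incoming weight from the processed side and a flag recording whether that vertex is currently a local source (so as to enforce a unique global source $r$ and acyclicity). Children are combined by taking the maximum, over the constantly many compatible interface states, of the combined minimum cut, mirroring the min-combinations of Proposition 3.2 and Lemma 3.5.

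The role of the structural lemmas is to guarantee that this local recursion captures a global optimum. Lemma 3.7 bounds the total sink excess $\sum_{i\ge 2}(i-1)s_i(G)$ by $k+1$ and Lemma 3.6 bounds the number of degree-$2$ sinks by $k+1$; together these show that the nontrivial orientation decisions, namely where sinks are created and how much incoming weight they absorb, are few and localized to individual outer chords and join vertices, so only polynomially much data crosses each interface. Lemma 3.8 shows that once the budget $\sum_{v\neq r}(|\delta^-(v)|-1)=k+1$ of a full orientation is met, every remaining edge is forced, and Corollary 3.9 shows that the forced chords form a rooted arborescence; consequently, fixing the sink pattern and the orientation of $e$ at each interface determines the rest of the orientation uniquely, which is exactly what makes the constant-size state sufficient for reconstruction. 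Running the DP over the $O(n)$ decomposition nodes, for each of the $n$ candidate roots, and finishing with the $O(m^2)$ packing computation of Corollary 3.4, gives a polynomial bound.

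The hard part will be proving the \emph{optimality} of the decomposition: that no globally better orientation requires coordination between the two sides of a chord beyond the information stored in the DP state, while simultaneously enforcing global acyclicity with a unique source. The delicate point is that reversing a path to relocate a sink, the exchange move used in the proofs of Lemmas 3.6 and 3.8, can propagate across a chord interface; one must show, using the forced-edge characterization of Lemma 3.8 and Corollary 3.9, that any beneficial reversal can be realized within a single side without altering the interface state, so that the max-over-min combination is exact. Establishing this exchange property rigorously, and pinning down the exact finite set of interface states needed to make it hold, is where the real work lies.
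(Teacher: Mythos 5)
Your proposal is an architecture, not a proof: you say yourself that ``establishing this exchange property rigorously, and pinning down the exact finite set of interface states needed to make it hold, is where the real work lies,'' and that missing piece is precisely the content of the theorem. The gap is concrete. Your DP state at a chord $e=uv$ must record, for each of $u$ and $v$, the accumulated incoming weight from the processed side, which is a nonnegative real, while your combination rule (``maximum, over the constantly many compatible interface states, of the combined minimum cut'') presumes a constant-size discrete state space; these two statements are incompatible. Because the objective is $\max_{OE}\min_{v\neq r} w(\delta^-(v))$ and the two sides of a chord share the vertices $u$ and $v$, one side cannot be summarized by a single optimal value: for a fixed discrete interface configuration there is a genuine trade-off between how much weight a side delivers into $u$ and $v$ and how large the minimum over that side's internal vertices is. What a side contributes is a Pareto frontier of pairs (weight into the interface, internal minimum), and you would have to prove both that this frontier has polynomial size and that max--min composition over it is exact, i.e., that no optimal orientation requires coordination between the two sides beyond the chosen frontier point. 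Your appeal to Lemmas 3.6--3.8 and Corollary 3.9 does not supply this: those results bound sink counts and identify forced edges within a single (partial) orientation; they say nothing about decomposability of the optimum across a chord, and, as you note, the path-reversal moves in their proofs can cross the interface.

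For contrast, the paper sidesteps the decomposition question entirely. It runs a greedy algorithm on the whole graph: repeatedly pick the vertex $u_t\neq r$ whose potential incoming star has minimum weight, make it a sink, propagate forced edges (Lemma 3.8 and Corollary 3.9 certify that the final result is an $r$-acyclic orientation), stop when the counter $f$ reaches $k+1$, and orient the remaining edges as an $r$-arborescence; then $k(G,w,r)$ is the minimum of the recorded values $\delta_t$. Its correctness argument is an induction on the number of chords that peels off one outer chord $P$ and compares the algorithm's run on $G$ with its run on $G\setminus P$ under an adjusted weight function --- which is exactly the kind of single-sided locality your DP would need, but established only for this particular greedy output rather than for all interface states. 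If you wish to salvage your approach, that locality statement is what you must prove; as written, your proposal does not contain it.
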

\begin{proof}
We give hereinbelow a polynomial time algorithm for solving rBRP in simple and 2-connected outerplanar graphs.
\newline \textbf{Input:} An undirected simple and 2-connected outerplanar graph $G = (V, E)$ with $k$ chords, $r\in V$, and $w\in \mathbb{R}_+^E$.
\newline \textbf{Output:} $G'=(V, E_d)$ an $r$-acyclic orientation of $G$ such that: $$k(G, w, r)=Min\{ \delta_{G'}^-(v) : v\in V\backslash \{ r\}\}.$$
 (0) Orient all current forced edges (incident to $r$) and set $F:=\delta (\{ r\})$.
\newline (1) Set $E_u := E\backslash F$, $E_d:=F$, $G'':=(V, E_u, E_d)$ (a mixed graph), $t:=1$, $U:=V$ and $f:=0$.
\newline (2) While $f < k+1$ do:
\newline (2.1) Find $u_t\in U\backslash \{ r\}$ such that $\delta_t:=w(\delta_{G''}(u_t)\cup \delta_{G''}^-(u_t)) = Min\{w(\delta_{G''}(v)\cup \delta_{G''}^-(v)) : v\in U\backslash \{ r\}\}$.
\newline (2.2) Set $f:=f+|\delta_{G''}(u_t)\cup \delta_{G''}^-(u_t)|-1$.
\newline (2.3) Orient all edges of $\delta_{G''}(u_t)$ in the direction of $u_t$, and all current trivial forced edges $F'$.
\newline (2.4) Set $F:=F\cup F'$, $E_u:=E_u\backslash (F\cup \delta_{G''}(u_t))$, $E_d:=E_d\cup F\cup \delta_{G''}(u_t))$, and $U:=U\backslash \{ u_t\}$.
\newline (2.5) Set $t:=t+1$.
\newline (2.6) End of While.
\newline (3) Set $\delta_t:=Min\{ w(e) : e\in E_u\}$.
\newline (4) Orient all edges of $E_u$ as a branching with a root $r$.
\newline (5) $G'=(V, E_d)$.
\newline (6) End of algorithm.
\newline Step (2.1) has a running time complexity $O(m)$, where $m=|E|$. Since we will run the loop of While at most $k$ times, then the running time of the whole algorithm is at most $O((k+1) m)$ because we need to compute a minimum once in step (3).
\newline Now, we will prove the correctness of the algorithm by induction on $k$.
\newline Note that Lemma 4.4 and Corollary 4.5 imply that the obtained directed graph $G'$ is an $r$-acyclic orientation of $G$.
\newline  Let $T$ be the number of iterations of the While loop. It is not difficult to see that: $$Min\{ \delta_{G'}^-(v) : v\in V\backslash \{ r\}\}=Min\{ \delta_t : 1\leq t\leq T+1\}.$$
 If $k=0$ then $G$ is a circuit and according to Lemma 4.1, $k(G, w, r)=Min\{ \delta_1, \delta_2\}$.
\newline Suppose now that $k\geq 1$ and $\delta_{t_0} = Min\{ \delta_t : 1\leq t\leq T+1\}$ where $t_0$ is the smallest possible one. We can suppose that $t_0\geq 2$ because if $t_0=1$ then we can conclude.
\newline It is not difficult to see that there is an outer chord $P$ such that $r$ is not an inner vertex of $P$ and $G\backslash P$ is a simple and 2-connected outer planar graph. Let $e=\{ u, v\}$ be the chord adjacent to $P$ in $G$ such that $e$ belongs to an outer chord in $G\backslash P$. Suppose that $e$ is oriented as $(u, v)$ in $G'$.
\newline {\bf Case 1:} There is a sink $v_t\in V(P)$ of degree 2 in $G'$.
\newline Note that all edges of $P$ are oriented in the direction of $v_t$. Since $t_0\geq 2$ then $t\neq t_0$ and $\delta_t > \delta_{t_0}$. It follows that $G'\backslash P$ is the output of the algorithm for the input $G\backslash P$ and $w'$, the weight function $w$ restricted to $E\backslash P$. By induction on $k$, we have: $\delta_{t_0} = k(G\backslash P, w', r)$.
\newline Whatever the orientation of the edges of $P$, $\delta_{t_0}$ cannot be improved, so $\delta_{t_0} = k(G, w, r)$.
\newline {\bf Case 2:} There is no sink $v_t\in V(P)$ of degree 2 in $G'$.
\newline Note that all edges of $P$ are oriented in the direction of $v$ in $G'$. Let $f\in P$ be the incident edge to $v$, $w_P=Min\{ w(g) : g\in P\}$, and $w'\in \mathbb{R}_+^{E\backslash P}$ such that $w'(e)=w_P+w(e)$ and $w'(g)=w(g)$ otherwise. It is not difficult to see that $G'\backslash P$ is the output of our algorithm for the input $G\backslash P$ and $w'$. So by induction on $k$, we have $k(G, w, r)=k(G\backslash P, w', r)=\delta_{t_0}$.
\end{proof}

\section{Conclusion}
We have introduced two mathematical formulations of multipath broadcast routing (in computer networks), namely BRP and rBRP. Then we have solved BRP and rBRP in outerplanar graphs. Future investigations can be studying the running time complexity of BRP and rBRP in general graphs and solve them in larger classes than outerplanar graphs.


\end{document}